\title{Optimal whitespace synchronization strategies}
\author{{Yossi Azar\thanks{Department of Computer Science, Tel-Aviv University, Tel-Aviv 69978, Israel. Email address: {\tt azar@tau.ac.il}.}} \qquad
{Ori Gurel-Gurevich\thanks{Microsoft
Research, One Microsoft Way, Redmond, WA 98052, USA. Email:
{\tt origurel@microsoft.com}.}} \qquad
{Eyal Lubetzky\thanks{Microsoft
Research, One Microsoft Way, Redmond, WA 98052, USA. Email:
{\tt eyal@microsoft.com}.}} \qquad
{Thomas Moscibroda\thanks{Microsoft
Research, One Microsoft Way, Redmond, WA 98052, USA. Email:
{\tt moscitho@microsoft.com}.}}}
\date{}
\numberwithin{equation}{section}
\newtheorem{maintheorem}{Theorem}
\newtheorem{theorem}{Theorem}[section]
\newtheorem*{theorem*}{Theorem}
\newtheorem*{observation*}{Observation}
\newtheorem{corollary}[theorem]{Corollary}
\theoremstyle{definition}{

\newtheorem*{example*}{Example}

\newtheorem{remark}[theorem]{Remark}
\newtheorem*{remark*}{Remark}
}
\newcommand{\N}{\mathbb N}
\newcommand{\E}{\mathbb{E}}
\renewcommand{\P}{\mathbb{P}}
\newcommand{\Geom}{\operatorname{Geom}}
\renewcommand{\epsilon}{\varepsilon}
\begin{document}
\maketitle

\begin{abstract}
The whitespace-discovery problem describes two parties, Alice and Bob, trying to establish a communication channel over one of a given large segment of whitespace channels.
Subsets of the channels are occupied in each of the local environments surrounding Alice and Bob, as well as in the global environment between them (Eve). In the absence of a common clock for the two parties, the goal is to devise time-invariant (stationary) strategies minimizing the synchronization time.
This emerged from recent applications in discovery of wireless devices.

We model the problem as follows. There are $N$ channels, each of which is open (unoccupied) with probability $p_1,p_2,q$ independently for Alice, Bob and Eve respectively. Further assume that $N \gg 1/(p_1 p_2 q)$ to allow for sufficiently many open channels. Both Alice and Bob can detect which channels are locally open and every time-slot each of them chooses one such
channel for an attempted sync. One aims for strategies that, with high probability over the environments,
guarantee a shortest possible expected sync time depending only on the $p_i$'s and $q$.

Here we provide a stationary strategy for Alice and Bob with a guaranteed expected sync time of $O(1 / (p_1 p_2 q^2))$ given that each party also has knowledge of $p_1,p_2,q$. When the parties are oblivious of these probabilities, analogous strategies incur a cost of a poly-log factor, i.e.\ $\tilde{O}(1 / (p_1 p_2 q^2))$.
Furthermore, this performance guarantee is essentially optimal as we show that any stationary strategies of Alice and Bob have an expected sync time of at least $\Omega(1/(p_1 p_2 q^2))$.
\end{abstract}


\section{Introduction}\label{sec:intro}

Consider two parties, Alice and Bob, who wish to establish a communication channel in one out of a segment of $N$ possible channels. Subsets of these channels may already be occupied in the local environments of either Alice or Bob, as well as in the global environment in between them whose users are denoted by Eve. Furthermore, the two parties do not share a common clock and hence one does not know for how long (if at all) the other party has already been trying to communicate. Motivated by applications in discovery of wireless devices, the goal is thus to devise time-invariant strategies that ensure fast synchronization with high probability (w.h.p.) over the environments.

We formalize the above problem as follows.
Transmissions between Alice and Bob go over three environments: local ones around Alice and Bob and an additional global one in between them, Eve. Let $A_i,B_i,E_i$ for $i=1,\ldots,N$ be the indicators for whether a given channel is open (unoccupied) in the respective environment. Using local diagnostics Alice knows $A$ yet does not know $B,E$ and analogously Bob knows $B$ but is oblivious of $A,E$. In each time-slot, each party selects a channel to attempt communication on (the environments do not change between time slots). The parties are said to \emph{synchronize} once they select the same channel $i$ that happens to be open in all environments (i.e., $A_i=B_i=E_i=1$). The objective of Alice and Bob is to devise strategies that would minimize their expected synchronization time.

For a concrete setup, let $A_i,B_i,E_i$ be independent Bernoulli variables with probabilities $p_1,p_2,q$ respectively for all $i$, different channels being independent of each other. (In some applications the two parties have knowledge of the environment densities $p_1,p_2,q$ while in others these are unknown.) Alice and Bob then seek strategies whose expected sync time over the environments is minimal.

\begin{example*}
  Suppose that $p_1=p_2=1$ (local environments are fully open) and Alice and Bob use the naive strategy of selecting a channel uniformly over $[N]$ and independently every round. If there are $Q \approx q N$ open channels in the global environment Eve then the probability of syncing in a given round is $Q/N^2 \approx q/N$, implying an expected sync time of about $N / q$ to the very least.
\end{example*}

In the above framework it could occur that \emph{all} channels are closed, in which case the parties can never sync; as a result, unless this event is excluded the expected sync time is always infinite.
However, since this event has probability at most $(1-p_1 p_2 q)^N \leq \exp(-N p_1 p_2 q)$ it poses no real problem for applications (described in further details later) where $N \gg 1/(p_1 p_2 q)$. In fact, we aim for performance guarantees that depend only on $p_1,p_2,q$ rather than on $N$, hence a natural way to resolve this issue is to extend the set of channels to be infinite, i.e.\ define $A_i,B_i,E_i$ for every $i\in \N$. (Our results can easily be translated to the finite setting with the appropriate exponential error probabilities.)

A \emph{strategy} is a sequence of probability measures $\{\mu^t\}$ over $\N$, corresponding to a randomized choice of channel for each time-slot $t\geq 1$. Suppose that Alice begins the discovery via the strategy $\mu_a$ whereas Bob begins the synchronization attempt at time $s$ via the strategy $\mu_b$.
Let $X_t$ be the indicator for a successful sync at time $t$ and let $X$ be the first time Alice and Bob sync, that is
\begin{align}
\P(X_t = 1 \mid A,B,E) &= \sum_j \mu_a^t(j) \mu_b^{s+t}(j) A_j B_j E_j\,,\label{eq-Xt-def}\\
X &= \min\{ t : X_t = 1\}\,.\label{eq-X-def}
\end{align}
The choice of $\mu_a,\mu_b$ aims to minimize $\E X$ where the expectation is over $A,B,E$ as well as the randomness of Alice and Bob in applying the strategies $\mu_a,\mu_b$.
\begin{example*}[\emph{fixed strategies}]
  Suppose that both Alice and Bob apply the same pair of strategies independently for all rounds, $\mu_a$ and $\mu_b$ respectively. In this special case, given the environments $A,B,E$ the random variable $X$ is geometric with success probability $\sum_j \mu_a(j) \mu_b(j) A_j B_j E_j$, thus the mappings $A\mapsto \mu_a$ and $B\mapsto \mu_b$ should minimize $\E X = \E \left[\big(\sum_j \mu_a(j) \mu_b(j) A_j B_j E_j\big)^{-1} \right]$.
\end{example*}

A crucial fact in our setup is that Alice and Bob have no common clock and no means of telling whether or not their peer is already attempting to communicate (until they eventually synchronize). As such, they are forced to apply a \emph{stationary} strategy, where the law at each time-slot is identical (i.e.\ $\mu^t \sim \mu^1$ for all $t$).
For instance, Alice may choose a single $\mu_a$ and apply it independently in each step (cf.\ above example). Alternatively, strategies of different time-slots can be highly dependent, e.g.\ Bob may apply a periodic policy comprising $\mu_b^1,\ldots,\mu_b^n$ and a uniform initial state $s\in[n]$.

The following argument demonstrates that stationary strategies are essentially optimal when there is no common clock between the parties. Suppose that Alice has some finite (arbitrarily long) sequence of strategies $\{\mu_a^t\}_1^{M_a}$ and similarly Bob has a sequence of strategies $\{\mu_b^t\}_1^{M_b}$. With no feedback until any actual synchronization we may assume that the strategies are non-adaptive, i.e.\ the sequences are determined in advance. Without loss of generality Alice is joining the transmission after Bob has already attempted some $\beta$ rounds of communication, in which case the expected synchronization time is $\E_{0,\beta} X$, where $\E_{\alpha,\beta} X$ denotes the expectation of $X$ as defined in~\eqref{eq-Xt-def},\eqref{eq-X-def} using the strategies $\{\mu_a^{t+\alpha}\},\{\mu_b^{t+\beta}\}$. Having no common clock implies that in the worst case scenario (over the state of Bob) the expected time to sync is $\max_{\beta} \E_{0,\beta} X$ and it now follows that Bob is better off modifying his strategy into a stationary one by selecting $\beta \in [M_b]$ uniformly at random, leading to an expected synchronization time of $ M_b^{-1} \sum_{\beta}\E_{0,\beta} X$.

\subsection{Optimal synchronization strategies}
Our main result is a recipe for Alice and Bob to devise stationary strategies guaranteeing an optimal expected synchronization time up to an absolute constant factor, assuming they know the environment densities $p_1,p_2,q$ (otherwise the expected sync time is optimal up to a poly-log factor).

\begin{maintheorem}\label{thm-1}
Consider the synchronization problem with probabilities $p_1,p_2,q$ for the environments $A,B,E$ respectively, and let $X$ denote the expected sync time. The following then holds:
\begin{compactenum}[(i)]
  \item \label{thm-1-upper}
  There are fixed strategies for Alice and Bob guaranteeing $\E X = O(1/(p_1 p_2 q^2))$, namely:
  \begin{compactitem}
    \item Alice takes $\mu_a \sim \Geom(p_2 q/6)$ over her open channels $\{i : A_i=1\}$,
    \item Bob takes $\mu_b \sim \Geom(p_1 q/6)$ over his open channels $\{i : B_i = 1\}$.
  \end{compactitem}
  Furthermore, for any fixed $\epsilon > 0$ there are fixed strategies for Alice and Bob that do not require knowledge of $p_1,p_2,q$ and guarantee $\E X = O\Big(\frac{1}{p_1 p_2 q^2}\log^{2+\epsilon}\big(\frac{1}{p_1 p_2 q}\big)\Big) = \tilde{O}\big(\frac{1}{p_1 p_2 q^2}\big)$, obtained by taking $\mu_a(\mbox{$j$-th open $A$ channel}) = \mu_b(\mbox{$j$-th open $B$ channel}) \propto 1/(j\log^{1+\epsilon/2} j)$.
  \item \label{thm-1-lower} The above strategies are essentially optimal as every possible choice of stationary strategies by Alice and Bob satisfies $\E X = \Omega(1/(p_1 p_2 q^2))$.
\end{compactenum}
\end{maintheorem}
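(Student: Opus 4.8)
The plan is to reduce the lower bound to a clean estimate on a single random variable and then to lower-bound that estimate via a Laplace-transform trick combined with a Cauchy--Schwarz inequality restricted to the channels both parties share.

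First I would exploit stationarity to show that the per-round success probability is constant in time. Writing $\bar\mu_a(j) = \E[\mu_a^t(j)]$ and $\bar\mu_b(j) = \E[\mu_b^{s+t}(j)]$ for the (time-independent, by stationarity) marginal measures of Alice and Bob --- genuine probability measures supported on $\{A_j=1\}$ and $\{B_j=1\}$ respectively --- independence of the two parties' internal randomness gives, for every $t$,
\[
\P(X_t = 1 \mid A,B,E) \;=\; \sum_j \bar\mu_a(j)\bar\mu_b(j)A_jB_jE_j \;=:\; \rho .
\]
A union bound then yields $\P(X \le T\mid A,B,E) \le T\rho$, whence $\E[X\mid A,B,E]=\Omega(\rho^{-1})$ and so $\E X = \Omega\big(\E[\rho^{-1}]\big)$. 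It therefore suffices to prove $\E[\rho^{-1}] = \Omega(1/(p_1p_2q^2))$. Here $\rho = \sum_j w_j E_j$ with $w_j := \bar\mu_a(j)\bar\mu_b(j)$ supported on the common channels $\mathcal C := \{j : A_j = B_j = 1\}$, and the $E_j$ are i.i.d.\ $\Bin(1,q)$ independent of $A,B$.

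The naive route --- Jensen's inequality $\E[\rho^{-1}]\ge 1/\E[\rho]= 1/(q\sum_j \E[\bar\mu_a(j)]\,\E[\bar\mu_b(j)])$ --- only gives $\Omega(1/(q\min(p_1,p_2)))$, and this is exactly the obstacle: it ignores the heavy lower tail of $\rho$ and, crucially, the fact that Alice and Bob cannot pile their mass on the \emph{same} Eve-open channels, since neither sees the other's environment nor $E$. To capture the lower tail I would use $\rho^{-1} = \int_0^\infty e^{-t\rho}\,dt$, which after taking $\E_E$ and using independence of the $E_j$ together with $\prod_j(1-x_j)\ge 1-\sum_j x_j$ gives
\[
\E_E[\rho^{-1}] \;=\; \int_0^\infty \prod_{j}\big(1 - q(1 - e^{-t w_j})\big)\,dt \;\ge\; \int_0^{t_0}\big(1 - q\,S(t)\big)\,dt, \qquad S(t) := \sum_j \big(1 - e^{-t w_j}\big).
\]
Thus it is enough to exhibit $t_0 = \Omega(1/(p_1p_2q^2))$ with $S(t_0)\le \tfrac{1}{2q}$, for then the integrand is $\ge\tfrac12$ on $[0,t_0]$ and $\E_E[\rho^{-1}]\ge t_0/2$.

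The heart of the matter is the bound on $S(t)$, where the missing factor $\sqrt{p_1p_2}$ enters. The elementary inequality $1-e^{-x}\le\sqrt x$ yields $S(t)\le \sqrt t\sum_{j\in\mathcal C}\sqrt{w_j}$, and Cauchy--Schwarz restricted to $\mathcal C$ gives $\sum_{j\in\mathcal C}\sqrt{\bar\mu_a(j)\bar\mu_b(j)} \le \sqrt{\bar\mu_a(\mathcal C)\,\bar\mu_b(\mathcal C)}$, where $\bar\mu_a(\mathcal C)$ denotes Alice's total mass on the shared channels. Because Alice's measure is built with no knowledge of $B$ (and symmetrically for Bob), one has $\E[\bar\mu_a(\mathcal C)] = p_2$ and $\E[\bar\mu_b(\mathcal C)] = p_1$, so by Markov's inequality the event $\{\bar\mu_a(\mathcal C)\le 4p_2,\ \bar\mu_b(\mathcal C)\le 4p_1\}$ has probability at least $\tfrac12$; on it $\sum_{j\in\mathcal C}\sqrt{w_j}\le 4\sqrt{p_1p_2}$, hence $S(t)\le 4\sqrt{p_1p_2\,t}$. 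Taking $t_0 = 1/(64\,p_1p_2q^2)$ then gives $S(t_0)\le\tfrac{1}{2q}$, so $\E_E[\rho^{-1}]\ge t_0/2$ on an event of probability $\ge\tfrac12$, and therefore $\E X = \Omega(\E[\rho^{-1}]) = \Omega(1/(p_1p_2q^2))$. I expect the main difficulty to be precisely isolating this $\sqrt{p_1p_2}$ gain --- that is, quantifying how the parties' inability to coordinate on shared, globally open channels costs the extra factor of $p_1q$ beyond the naive $1/(q\min(p_1,p_2))$ bound --- and choosing the representation of $\E[\rho^{-1}]$ robust enough to apply uniformly over all stationary strategies.
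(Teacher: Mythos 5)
Your proposal addresses only part~\eqref{thm-1-lower} of the theorem. Part~\eqref{thm-1-upper} --- the matching upper bound $\E X = O(1/(p_1p_2q^2))$ achieved by the geometric strategies $\Geom(p_2 q/6)$ and $\Geom(p_1 q/6)$ over the locally open channels, together with the density-oblivious variant losing a $\log^{2+\epsilon}$ factor --- is not touched at all, so as a proof of the full statement this is incomplete. (The paper proves the upper bound by conditioning on the event $M_k$ that the first channel $J$ open in all three environments is preceded by at most $(k+1)/(p_2q)$ Alice-open and $(k+1)/(p_1q)$ Bob-open channels, showing the per-round success probability on $M_k$ is at least $\mathrm{e}^{-4\alpha(k+1)}\alpha^2 p_1p_2q^2$ while $\P(M_k)\le 2\mathrm{e}^{-k}$, and summing over $k$; the oblivious version additionally randomizes a guess of $p_iq$ each round. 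None of this is recoverable from your lower-bound machinery, so it must be supplied separately.)

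For the lower bound itself, your argument is correct and takes a genuinely different route from the paper's. The paper works with dyadic level sets $S^a_k=\{j: 2^{-k}<\mu_a(j)\le 2^{-k+1}\}$, uses Chernoff bounds to show that with probability at least $\tfrac12$ the subsets $T^a_k,T^b_\ell$ of level-set channels open in the complementary environments satisfy $|T^a_k|\le\beta 2^kp_2q$ above a threshold $K_a$ and vanish below it (likewise for Bob), and then bounds $R\le\sum_{k,\ell}\sqrt{|T^a_k|\,|T^b_\ell|}\,2^{-k+1}2^{-\ell+1}=O(p_1p_2q^2)$, finishing with a first-moment argument (Corollary~\ref{cor-stat-lower-bound}). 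You instead integrate out Eve exactly via $\rho^{-1}=\int_0^\infty e^{-t\rho}\,dt$, reduce to the deterministic estimate $S(t)\le\sqrt{t}\sum_{j\in\mathcal{C}}\sqrt{w_j}$, and apply Cauchy--Schwarz on the shared channels plus Markov, the identities $\E[\bar\mu_a(\mathcal{C})]=p_2$ and $\E[\bar\mu_b(\mathcal{C})]=p_1$ being exactly where the parties' mutual ignorance enters. The individual steps ($1-e^{-x}\le\sqrt{x}$, the Weierstrass inequality $\prod_j(1-x_j)\ge 1-\sum_j x_j$, the union bound $\P(X\le T\mid A,B,E)\le T\rho$, and the reduction to time-independent marginals $\bar\mu_a,\bar\mu_b$ via stationarity) all check out, and the constants are consistent ($S(t_0)\le 1/(2q)$ at $t_0=1/(64p_1p_2q^2)$). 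Your version avoids Chernoff bounds entirely, yields the stronger conclusion $\E[\rho^{-1}]=\Omega(1/(p_1p_2q^2))$ rather than only a tail bound on $R$ as in Theorem~\ref{thm-R-lower-bound}, and handles general (possibly time-correlated) stationary strategies explicitly; the paper's dyadic decomposition is more hands-on but requires concentration estimates for Eve's environment that your Laplace-transform computation sidesteps.
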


\begin{remark*}
The factor $1/6$ in the parameters of the geometric distributions can be fine-tuned to any smaller (or even slightly larger) fixed $\alpha>0$ affecting the overall expected sync time $\E X$ by a multiplicative constant. See Fig.~\ref{fig:sync} for a numerical evaluation of $\E X$ for various values of $\alpha$.
\end{remark*}

Recall that Alice and Bob must apply stationary strategies in the absence of any common clock or external synchronization device shared by them, a restriction which is essential in many of the applications of wireless discovery protocols. However, whenever a common external clock does happen to be available there may be strategies that achieve improved performance. The next theorem establishes the optimal strategies in this simpler scenario.

\begin{maintheorem} \label{thm-2}
Consider the synchronization problem with probabilities $p_1,p_2,q$ for the environments $A,B,E$ respectively, and let $X$ denote the expected sync time. If Alice and Bob have access to a common clock then there are non-stationary strategies for them achieving $\E X =  O(1/(\min\{p_1,p_2\} q))$. Moreover, this is tight as the expected sync time is always $\Omega(1/(\min\{p_1,p_2\}q))$.
\end{maintheorem}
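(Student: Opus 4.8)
The plan is to handle the two bounds by rather different means: the lower bound is a per-slot argument that in fact applies to \emph{every} strategy (stationary or not, clock-aided or not), while the upper bound genuinely exploits the common clock to coordinate a phase-based scan. For the lower bound I would first bound the chance of syncing in a single slot $t$. Writing $\bar\alpha_i=\P(a_t=i,\,A_i=1)$ and $\bar\beta_i=\P(b_t=i,\,B_i=1)$ for the sub-probabilities that Alice resp.\ Bob occupy an open channel $i$ at time $t$, the independence of $A,B,E$ together with $\eqref{eq-Xt-def}$ gives
$$\P(X_t=1)=q\sum_i \bar\alpha_i\,\bar\beta_i.$$
Now $\sum_i\bar\alpha_i\le 1$ and $\sum_i\bar\beta_i\le 1$ (one channel per slot), while $\bar\alpha_i\le\P(A_i=1)=p_1$ and $\bar\beta_i\le p_2$ (a channel can be chosen only when open); hence $\sum_i\bar\alpha_i\bar\beta_i\le\min\{p_1,p_2\}$ and $\P(X_t=1)\le q\min\{p_1,p_2\}=:s$. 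A union bound yields $\P(X\le T)\le Ts$, so with $T=1/(2s)$ we get $\P(X>T)\ge\tfrac12$ and $\E X\ge T\,\P(X>T)=\Omega\big(1/(\min\{p_1,p_2\}q)\big)$. The point is that this uses only the independence of $A$ and $B$ and that each party plays an open channel, so it rules out improvement even with a common clock.

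For the upper bound assume without loss of generality $p_1\le p_2$ and set $m=\lceil 1/p_1\rceil$ and $T=\lceil p_2/p_1\rceil$. Partition $\N$ into consecutive windows $W_1,W_2,\dots$ of length $m$, and use the common clock to run phases: the $T$ slots of phase $k$ are devoted to window $W_k$, and in each such slot both Alice and Bob independently re-sample a uniform channel among their own open channels inside $W_k$. Since the windows are disjoint, the environments—and hence the events $\{\text{sync during phase }k\}$—are independent across $k$. It therefore suffices to prove that a single phase succeeds with probability $\Omega(p_2q)$: the number of phases until the first success is then dominated by a geometric variable of mean $O(1/(p_2q))$, and the total expected time is $T\cdot O(1/(p_2q))=O(1/(p_1q))$, as claimed.

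To estimate the per-phase success probability I would split it in two. Let $N_g=\sum_{i\in W_k}A_iB_iE_i$ count the channels of $W_k$ open in all three environments; as $\E N_g=m\,p_1p_2q\asymp p_2q$ is small, a first/second moment computation (using that $N_g$ is essentially $\mathrm{Poisson}(p_2q)$, so $\E N_g^2\le 2\,\E N_g$) gives $\P(N_g\ge1)=\Omega(p_2q)$. Conditioned on a good channel $i^*$, the per-slot probability that both parties land on $i^*$ is $1/(\#A\cdot\#B)$, where $\#A,\#B$ are the numbers of open channels of Alice, Bob in $W_k$; since $\#A=1+\Bin(m-1,p_1)$ and $\#B=1+\Bin(m-1,p_2)$ are independent given $i^*$ good, $\E[\#A\cdot\#B\mid i^*\text{ good}]=O(p_2/p_1)$. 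Markov's inequality then keeps $\#A\cdot\#B\le O(T)$ with constant probability, so the phase syncs on $i^*$ within its $T$ slots with probability $1-e^{-\Omega(1)}=\Omega(1)$. Multiplying the two parts gives $\Omega(p_2q)$ per phase, completing the budget; if instead $p_2<p_1$ one simply exchanges the roles of Alice and Bob.

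The main obstacle I anticipate is precisely that the windows must be short ($m\asymp1/p_1$) to keep the phase length, and hence the running time, at $O(1/(p_1q))$; this forces $\#A$ and $\#B$ to have only constant-order means, so they do \emph{not} concentrate. One therefore cannot invoke Chernoff bounds and must instead establish existence of a good channel through the second moment and control the product $\#A\cdot\#B$ through Markov's inequality. Calibrating the three scales $m$, $T$, and the per-phase success probability so that they multiply out exactly to $1/(p_1q)$—rather than, say, the naive $1/(p_1p_2q)$ one gets from letting a single party scan a whole window—is where the delicacy of the argument lies.
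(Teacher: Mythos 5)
Your proposal is correct, and the two halves compare differently to the paper. The lower bound is essentially the paper's own argument: the paper likewise observes that in any given round the chosen channel of Alice is open for both Bob and Eve with probability at most $p_2q$ (and symmetrically for Bob), giving a per-round success probability of at most $\min\{p_1,p_2\}q$, and then applies the same first-moment/union-bound step; your sub-probability formalization with $\bar\alpha_i,\bar\beta_i$ is just a more explicit writing of this. The upper bound, however, takes a genuinely different route. The paper also uses the clock to dedicate each round to a fresh disjoint set of channels, but it makes each set \emph{infinite} and has each party pick the \emph{first} locally open channel in the current set; these coincide precisely when the first channel open to either party is open to both, an event of probability $p_1p_2/(p_1+p_2-p_1p_2)\ge\min\{p_1,p_2\}/2$, so each round independently succeeds with probability at least $\min\{p_1,p_2\}q/2$ and $X$ is geometric with mean $O(1/(\min\{p_1,p_2\}q))$. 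This collapses your three-scale calibration of $m$, $T$, and the per-phase success probability into one line and sidesteps exactly the non-concentration issues you flag: ``first open channel'' self-normalizes, so no second-moment bound on $N_g$ and no Markov control of $\#A\cdot\#B$ are needed. Your phase construction does go through --- Paley--Zygmund gives $\P(N_g\ge1)=\Omega(p_2q)$ and $\E[\#A\,\#B\mid i^*\text{ good}]=O(T)$ with Markov gives a constant conditional success probability over the $T$ slots --- provided you verify the point you leave implicit, namely that conditioning on the existence (or location) of the good channel $i^*$ only perturbs the joint law of $(\#A,\#B)$ on the remaining channels by $O(1)$ and preserves the bound $\E[\#A\,\#B]=O(p_2/p_1)$; but it buys nothing over the simpler scheme.
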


\begin{figure}[t]
\centering
\fbox{\includegraphics[width=5in]{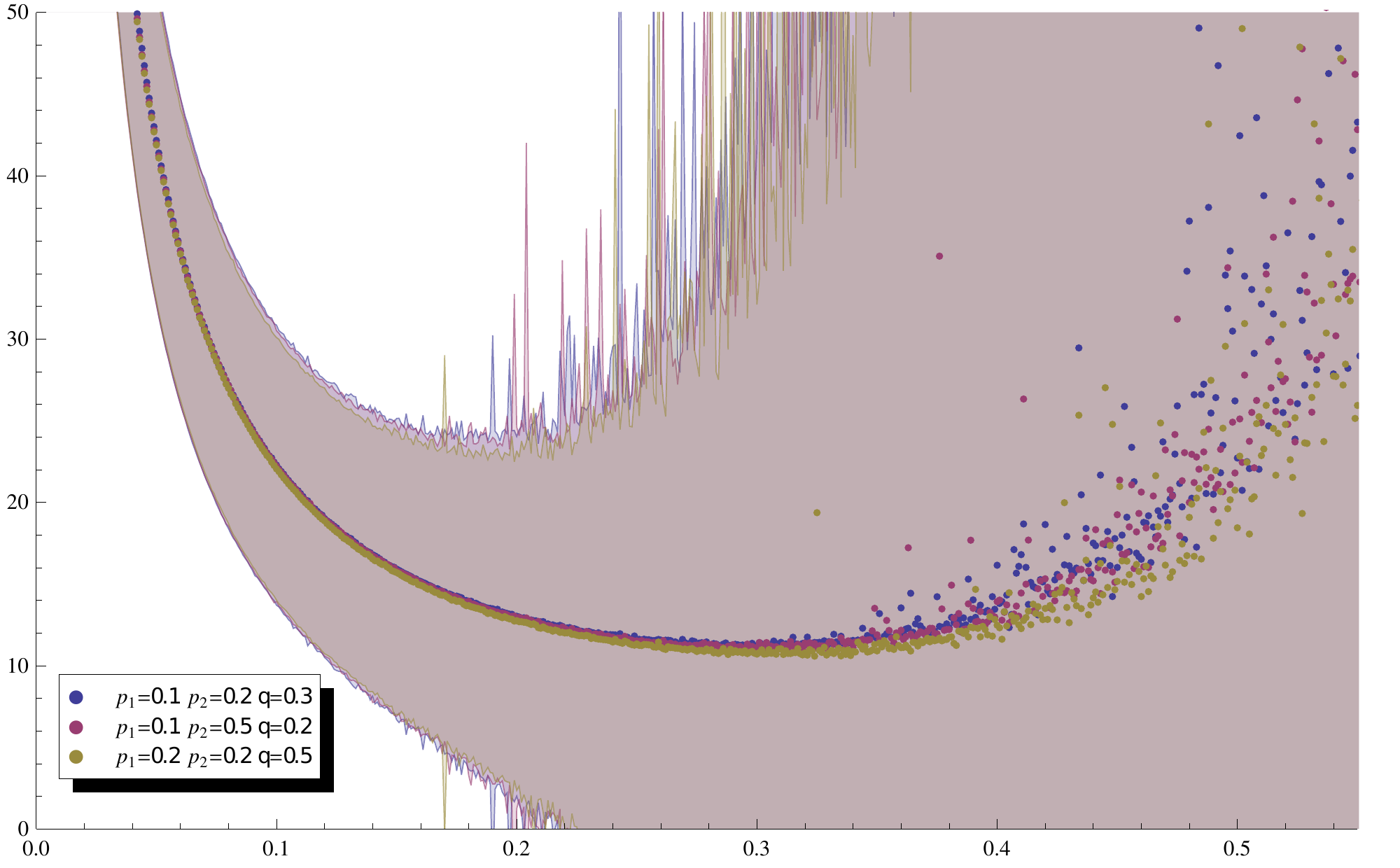}}
\caption{Synchronization time $\E X $ as in~\eqref{eq-X-def} normalized by a factor of $p_1 p_2 q^2$ for a protocol using geometric distributions with parameters $\alpha p_i q$ for various values of $0<\alpha<1$. Markers represent the average of the expected synchronization time $\E X$ over $10^5$ random environments with $n=10^4$ channels; surrounding envelopes represent a window of one standard deviation around the mean.}
\label{fig:sync}
\end{figure}

\subsection{Applications in wireless networking and related work}

The motivating application for this work comes from recent developments in wireless networking. In late 2008, the FCC issued a historic ruling permitting the use of unlicensed devices in the unused portions of the UHF spectrum (mainly the part between 512Mhz and 698Mhz), popularly referred to as ``whitespaces''. Due to the potential for substantial bandwidth and long transmission ranges, whitespace networks (which are, in a more general context, also frequently referred to as cognitive wireless networks) represent a tremendous opportunity for mobile and wireless communication. One critical equipment imposed by the FCC in its ruling is that whitespace wireless devices must not interfere with incumbents, i.e., the current users of this spectrum (specifically, in the UHF bands, these are TV broadcasters as well as wireless microphones). Hence, these incumbents are considered ``primary users'' of the spectrum, while whitespace devices are secondary users and are allowed to use the spectrum only opportunistically,  whenever no primary user is using it (The FCC mandates whitespace devices to detect the presence of primary users using a combination of sensing techniques and a geo-location database). At any given time, each whitespace device thus has a spectrum map on which some parts are blocked off while others are free to use.

The problem studied in this paper captures exactly the situation in whitespace networks when two nodes $A$ and $B$ seek to discover one another to establish a connection. Each node knows its own free channels on which it can transmit, but it does not know which of these channels may be available at the other node, too. Furthermore, given the larger transmission range in whitespace networks (up to a mile at Wi-Fi transmission power levels), it is likely that the spectrum maps at $A$ and $B$ are similar yet different. For example, a TV broadcast tower is likely to block off a channel for both $A$ and $B$, but a wireless microphone --- due to its small transmission power --- will prevent only one of the nodes from using a channel.

Thus far, the problem of synchronizing/discovery of whitespace nodes has only been addressed
when one of the nodes is a fixed access point (AP) and the other node is a client. Namely, in the framework studied in~\cite{BCMMW} the AP broadcasts on a fixed channel and the client node wishes to scan its local environment and locate this channel efficiently. That setting thus calls for technological solutions (e.g.\ based on
scanning wider channel widths) to allow the client
to find the AP channel faster than the approach of searching all possible channels one by one.

To the best of our knowledge, the results in this paper are the first to provide an efficient synchronization scheme in the  setting where both nodes are remote clients that may broadcast on any given channel in the whitespace region.

\subsection{Related work on Rendezvous games}

From a mathematical standpoint, the synchronization problems considered in this paper seem to belong to the field of Rendezvous Search Games. The most familiar problem of this type is known as The Telephone Problem or The Telephone Coordination Game. In the telephone problem each of two players is placed in a distinct room with $n$ telephone lines connecting the rooms. The lines are not labeled and so the players, who wish to communicate with each other, cannot simply use the first line (note that, in comparison, in our setting the channels are labeled and the difficulty in synchronizing is due to the local and global noise).

The optimal strategy in this case, achieving an expectation of $n/2$, is for the first player to pick a random line and continue using it, whereas the second player picks a uniformly random permutation on the lines and try them one by one. However, this strategy requires the players to determine which is the first and which is the second. It is very plausible that such coordination is not possible, in which case we require both players to employ the same strategy.

The obvious solution is for each of them to pick a random line at each turn, which gives an expectation of $n$ turns. It turns out, however, that there are better solutions: Anderson and Weber~\cite{AW} give a solution yielding an expectation of $\approx 0.8288497 n$ and conjecture it's optimality.

To our knowledge, the two most prominent aspects of our setting, the presence of asymmetric information and the stationarity requirement (stemming from unknown start times) have not been considered in the literature. For example, the Anderson-Weber strategy for the telephone problem is not stationary --- it has a period of $n-1$. It would be interesting to see what can be said about the optimal stationary strategies for this and other rendezvous problems. The interested reader is referred to~\cites{ABE,AG} and the references therein for more information on rendezvous search games.

\section{Analysis of synchronization strategies}\label{sec:main}

\subsection{Proof of Theorem~\ref{thm-1}, upper bound on the sync time}

Let $\mu_a$ be geometric with mean $(\alpha p_{2}q)^{-1}$ over the open channels for Alice $\{i:A_i=1\}$ and analogously let $\mu_b$ be geometric with mean $(\alpha p_{1}q)^{-1}$ over the open channels for Bob $\{i:B_i=1\}$, where $0<\alpha<1$ will be determined later.

Let $J = \min\{ j : A_j = B_j = E_j = 1\}$ be the minimal channel open in all three environments. Further let $J_a,J_b$ denote the number of locally open channels prior to channel $J$ for Alice and Bob resp., that is
\begin{align*}
& J_a = \#\{ j < J : A_j = 1 \}\,,\quad J_b = \#\{ j < J : B_j = 1 \}\,.
\end{align*}
Finally, for some integer $k \geq 0$ let $M_k$ denote the event
\begin{equation}\label{eq-max-J0-J1}
  k \leq \max\left\{ J_a p_2 q ~,~ J_b p_1 q\right\} < k + 1 \,.
\end{equation}
Notice that, by definition, Alice gives probability $(1 - \alpha p_2 q)^{j-1} \alpha p_2 q$ to her $j$-th open channel while Bob gives probability $(1 - \alpha p_1 q)^{j-1} \alpha p_1 q$ to his $j$-th open channel. Therefore, on the event $M_k$ we have that in any specific round, channel $J$ is chosen by both players with probability at least
\begin{align*}
(1 - \alpha p_1 q)^{\frac{k+1}{p_1 q}} (1- \alpha p_2 q)^{\frac{k+1}{p_2 q}} \alpha^2 p_1 p_2 q^2 \geq
\mathrm{e}^{-4\alpha(k+1)} \alpha^2 p_1 p_2 q^2\,,
\end{align*}
where in the last inequality we used the fact that $(1-x)\geq \exp(-2x)$ for all $0 \leq x\leq \frac12$, which will be justified by later choosing $\alpha < \frac12$.
Therefore, if $X$ denotes the expected number of rounds required for synchronization, then
\begin{align}
  \label{eq-exp-on-Mk}
  \E[ X \mid M_k ] \leq \mathrm{e}^{4\alpha(k+1)} (\alpha^2 p_1 p_2 q^2)^{-1}\,.
\end{align}
On the other hand, $J_a$ is precisely a geometric variable with the rule $\P(J_a=j) = (1-p_2 q)^j p_2 q$ and similarly $\P(J_b=j) = (1-p_1 q)^j p_1 q$. Hence,
\[ \P(M_k) \leq (1-p_2 q)^{k/(p_2 q)} + (1-p_1 q)^{k/(p_1 q)} \leq 2\mathrm{e}^{-k}\,.\]
Combining this with~\eqref{eq-exp-on-Mk} we deduce that
\begin{equation}
   \label{eq-Ex-upper-bound}
   \E X \leq 2 \sum_k \mathrm{e}^{-k} \E[X \mid M_k] \leq 2\mathrm{e}^{4\alpha} (\alpha^2 p_1 p_2 q^2)^{-1} \sum_k
\mathrm{e}^{(4\alpha-1)k} \leq \frac{2\mathrm{e}}{\alpha^2\left(\mathrm{e}^{1-4\alpha}-1\right)}\left(p_1 p_2 q^2\right)^{-1}
 \end{equation}
where the last inequality holds for any fixed $\alpha < \frac14$. In particular, a choice of $\alpha = \frac16$ implies that $\E X \leq 500/ \left(p_1 p_2 q^2\right)$, as required.
\qed

\begin{remark}\label{rem-p1=p2}
  In the special case where $p_1=p_2$ (denoting this probability simply by $p$) one can optimize the choice of constants in the proof above to obtain an upper bound of $\E X \leq 27 / (p q)^2$.
\end{remark}

\subsection{Strategies oblivious of the environment densities}
Observe first that if Alice and Bob multiply the parameters of their geometric distributions (as specified in Part~\eqref{thm-1-upper} of Theorem~\ref{thm-1}) by some absolute constant $0 < c < 1$ then the upper bound~\eqref{eq-Ex-upper-bound} on $\E X$ will increase by a factor of at most some absolute $C > 0$.

With this in mind, fix $\epsilon > 0$ and consider the following strategies.
Every round, Alice guesses $p_2 q$ to be $\exp(-i)$ for $i=1,2,\ldots$ with probability proportional to $i^{-(1+\epsilon/2)}$, while Bob does the same for $p_1 q$. This way, both Alice and Bob successfully guess these parameters (within a factor of $\mathrm{e}$) with probability at least
\[ c \log^{-(1+\epsilon/2)}\Big(\frac{1}{p_1 q}\Big) \log^{-(1+\epsilon/2)}\Big(\frac{1}{p_2 q}\Big) \geq c \log^{-(2+\epsilon)} \Big(\frac{1}{p_1 p_2 q}\Big)\]
where $c > 0$ is some absolute constant. Hence, restricting our analysis to these rounds only we obtain an expected sync time of at most $O(1/(p_1 p_2 q^2))$ such rounds and overall
 \[ \E X = O\Big(\frac{1}{p_1 p_2 q^2} \log^{2+\epsilon}\Big(\frac{1}{p_1 p_2 q}\Big)\Big) = \tilde{O}\Big(\frac{1}{p_1 p_2 q^2}\Big)\,.\]

An elementary calculation shows that the above strategy is equivalent to having both Alice and Bob choose their corresponding $j$-th open channel ($j=1,2,\ldots$) with probability proportional to $j \log^{1+\epsilon/2}j $. Indeed, one may repeat the arguments from the previous section directly for these strategies and obtain that (in the original notation) for any given round
\[ \P(\mbox{Alice and Bob select $J$}\mid M_k) \geq c p_1 p_2 q^2 \left[ (k+1)^2 \log^{1+\epsilon/2}\Big(\frac{k+1}{p_1 q}\Big)\log^{1+\epsilon/2}\Big(\frac{k+1}{p_2 q}\Big) \right]^{-1}\]
for some absolute constant $c>0$. From this we then infer that
\begin{align*}
\E X &\leq O\left(1/(p_1 p_2 q^2)\right) \sum_k \mathrm{e}^{-k} (k+1)^2 \log^{1+\epsilon/2}\Big(\frac{k+1}{p_1 q}\Big)\log^{1+\epsilon/2}\Big(\frac{k+1}{p_2 q}\Big) \\
&= O\Big(\frac{1}{p_1 p_2 q^2} \log^{2+\epsilon}\Big(\frac{1}{p_1 p_2 q}\Big)\Big)\,,
\end{align*}
as argued above. \qed

\subsection{Proof of Theorem~\ref{thm-1}, lower bound on the sync time}

\begin{theorem}\label{thm-R-lower-bound}
Let $\mu_a,\mu_b$ be the stationary distribution of the strategies of Alice and Bob resp., and let $R=\sum_j \mu_a(j) \mu_b(j) A_j B_j E_j$ be the probability of successfully syncing in any specific round. Then there exists some absolute constant $C>0$ such that $\P(R<C p_0 p_1 q^2)\ge \frac12$.
\end{theorem}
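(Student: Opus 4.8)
The plan is to prove the equivalent statement $\P(R\ge t)\le \frac12$ for $t=Cp_1p_2q^2$ with $C$ an absolute constant. The first thing to observe is that a plain Markov bound on $\E R$ is hopeless: if both players place almost all of their mass on their first locally open channel, then these two channels coincide with probability $\Omega(\min(p_1,p_2))$, so $\E R=\Omega(q\min(p_1,p_2))$, which exceeds $t$ by a factor tending to infinity. The reason such concentrated strategies are nevertheless bad is that they make $R$ large only on the rare (probability $\approx q$) event that Eve also leaves that channel open, and this thinning effect is what must be captured. I will use throughout that each player selects only locally open channels, so $\mu_a(j)>0\Rightarrow A_j=1$ and $\mu_b(j)>0\Rightarrow B_j=1$; writing $w_j:=\mu_a(j)\mu_b(j)$ this gives $R=\sum_j w_j E_j$, where crucially $E$ is independent of $(A,B)$ and hence of the random measures $\mu_a,\mu_b$, and each $E_j$ is an independent $\Bin(1,q)$.

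The key step is to replace the first moment of $R$ by the first moment of a truncation. Splitting channels into heavy ($w_j\ge t$) and light ($w_j<t$), the event $\{R\ge t\}$ forces either some heavy channel to have $E_j=1$ or the light part $\sum_{j:w_j<t}w_j E_j$ to reach $t$. A union bound on the former and Markov on the latter, using the independence of $E_j$ from $w_j$, yield
\[
 \P(R\ge t)\ \le\ q\,\E\#\{j:w_j\ge t\}\ +\ \frac{q}{t}\,\E\!\sum_{j:w_j<t}w_j\ =\ \frac{q}{t}\,\E\Big[\sum_j \min(w_j,t)\Big],
\]
where the last equality is just $t\,\one_{\{w_j\ge t\}}+w_j\,\one_{\{w_j<t\}}=\min(w_j,t)$. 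This reduces the whole theorem to establishing the bound $\E\big[\sum_j\min(w_j,t)\big]\le t/(2q)$.

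To control the truncated sum I would apply $\min(w_j,t)\le\sqrt{t\,w_j}=\sqrt t\,\sqrt{\mu_a(j)\mu_b(j)}$ and then decouple using $A\perp B$:
\[
 \E\Big[\sum_j \min(w_j,t)\Big]\ \le\ \sqrt t\,\sum_j \E\!\sqrt{\mu_a(j)}\;\E\!\sqrt{\mu_b(j)}.
\]
Here the support restriction pays off twice over. Since $\mu_a(j)=0$ off $\{A_j=1\}$, Cauchy--Schwarz gives $\E\sqrt{\mu_a(j)}\le \sqrt{p_1}\,\sqrt{\E\mu_a(j)}$, and similarly $\E\sqrt{\mu_b(j)}\le \sqrt{p_2}\,\sqrt{\E\mu_b(j)}$; a second Cauchy--Schwarz over $j$, using $\sum_j\E\mu_a(j)=\sum_j\E\mu_b(j)=1$, collapses the sum to $\sqrt{p_1p_2}$. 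This gives $\E\big[\sum_j\min(w_j,t)\big]\le\sqrt{t\,p_1p_2}$, whence $\P(R\ge t)\le q\sqrt{p_1p_2/t}=1/\sqrt{C}$ for $t=Cp_1p_2q^2$, and taking $C=4$ completes the proof.

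The main obstacle is conceptual rather than computational: realizing that the first moment of $R$ is off by a large factor and that the right object is the truncated sum $\sum_j\min(w_j,t)$, which simultaneously tames the rare ``heavy channel opened by Eve'' contribution and the diffuse light contribution. Once truncation is in place the remaining work is routine — the two uses of Cauchy--Schwarz (one exploiting that each distribution is supported on the sparse open channels, which manufactures the extra $\sqrt{p_1p_2}$, and one exploiting that $\mu_a,\mu_b$ are probability measures), together with the independence of $A$ and $B$ that lets the two players' contributions factor.
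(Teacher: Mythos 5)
Your proof is correct, and it takes a genuinely different route from the paper's. The paper works pointwise on a high-probability event: it decomposes each player's measure into dyadic level sets $S^a_k=\{j: 2^{-k}<\mu_a(j)\le 2^{-k+1}\}$, uses Chernoff bounds to show that the ``doubly open'' portions $T^a_k, T^b_k$ have size $O(2^k p_2 q)$ resp.\ $O(2^k p_1 q)$ for $k$ above a threshold and are empty below it, and then decouples the two players via $|T^a_k\cap T^b_\ell|\le\sqrt{|T^a_k|\,|T^b_\ell|}$ to bound $R$ deterministically on that event. You instead run a truncated first-moment argument: the identity $q\,\E\#\{j:w_j\ge t\}+\frac{q}{t}\E\sum_{j:w_j<t}w_j=\frac{q}{t}\E\big[\sum_j\min(w_j,t)\big]$ is exactly the right object (and your preliminary remark that plain Markov on $\E R\asymp q\min\{p_1,p_2\}$ fails is accurate --- this is the same phenomenon the paper's thresholds $K_a,K_b$ are designed to handle), and the two Cauchy--Schwarz applications play the role of the paper's $\sqrt{|T^a_k|\,|T^b_\ell|}$ decoupling plus Chernoff. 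Your version is more elementary (no concentration inequalities), gives the stronger quantitative tail bound $\P(R\ge t)\le q\sqrt{p_1p_2/t}$ valid for every $t>0$, and yields a much better constant ($C=4$ versus the paper's $128\beta$ with $\beta=20$), which in turn improves the constant $c=1/(8C)$ in Corollary~\ref{cor-stat-lower-bound}. All steps check out: the support condition $\mu_a(j)>0\Rightarrow A_j=1$ is part of the paper's setup, the independence of $E$ from $(\mu_a,\mu_b)$ and of $\mu_a$ from $\mu_b$ is legitimate, and the inclusion $\{R\ge t\}\subseteq\{\exists j: w_j\ge t,\ E_j=1\}\cup\{\sum_{j:w_j<t}w_jE_j\ge t\}$ is valid. (The $p_0p_1$ in the statement is a typo for $p_1p_2$; both your argument and the paper's prove the $p_1p_2$ version.)
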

\begin{proof}
Given the environments $A,B$ define
\begin{align*}
S^a_k = \{j \,:\, 2^{-k} < \mu_a(j) \le 2^{-k+1} \} \,,\quad
S^b_k = \{j \,:\, 2^{-k} < \mu_b(j) \le 2^{-k+1} \} \,.
\end{align*}
Notice that the variables $S^a_k$ are a function of the strategy of Alice which in turn depends on her  local environment $A$ (an analogous statement holds for $S^b_k$ and $B$). Further note that clearly
\[ |S^a_k| < 2^k\quad\mbox{ and }\quad |S^b_k| < 2^k\quad\mbox{ for any $k$}\,.\]
Let $T^a_k$ denote all the channels where the environments excluding Alice's (i.e., both of the other environments $B,E$) are open, and similarly let $T^b_k$ denote the analogous quantity for Bob:
\[ T^a_k = \{j\in S^a_k \,:\, B_j=E_j=1 \}\,,\quad
 T^b_k = \{j\in S^b_k \,:\, A_j=E_j=1 \}\,. \]
Obviously, $\E|T^a_k| < 2^k p_2 q$ and $\E|T^b_k| < 2^k p_1 q$.

Since $\{B_j\}_{j\in \N}$ and $\{E_j\}_{j\in \N}$ are independent of $S^a_k$ (and of each other), for any $\beta>0$ we can use the Chernoff bound (see, e.g., \cite{JLR}*{Theorem 2.1} and~\cite{AS}*{Appendix A}) with a deviation of $t = (\beta-1)2^k p_2 q$ from the expectation to get
\[ \P\left(|T^a_k|>\beta 2^k p_2 q \right) < \exp\bigg(- \frac32 \frac{(\beta-1)^2}{\beta+2} 2^k p_2 q\bigg) \,,\]
and analogously for Bob we have
\[ \P\left(|T^b_k|>\beta 2^k p_1 q \right) < \exp\bigg(- \frac32 \frac{(\beta-1)^2}{\beta+2} 2^k p_1 q\bigg) \,.\]
Clearly, setting
\[ K_a = \log_2(1/(p_2 q))-3 \,,\quad K_b = \log_2(1/(p_1 q))-3 \] and taking $\beta$ large enough (e.g., $\beta=20$ would suffice) we get
\begin{align}\label{eq-Tk-large1} \P\left( \mbox{$\bigcup$}_{k \geq K_a}\left\{ |T^a_k|> \beta 2^k p_2 q\right\} \right)
\leq 2\, \P\left(|T^a_{K_a}|>\beta 2^{K_a} p_2 q \right) < \frac18
\end{align}
and
\begin{align}
\label{eq-Tk-large2} \P\left( \mbox{$\bigcup$}_{k \geq K_b}\left\{ |T^b_k|> \beta 2^k p_1 q\right\} \right) < \frac18\,.
 \end{align}
Also, since $\sum_{k < K_a} |S^a_k| < 2^{K_a} \le (8 p_2 q)^{-1}$ and
similarly $\sum_{k < K_b} |S^b_k| < 2^{K_b} \le (8p_1 q)^{-1}$,
we have by Markov's inequality that
\begin{equation}
  \label{eq-Tk-empty1}
  \P\left( \mbox{$\bigcup$}_{k < K_a}\left\{ |T^a_k| > 0\right\} \right) \leq
\sum_{k<K_a} \E |T_k^a| = p_2 q \sum_{k<K_a} \E |S_k^a| < \frac18
\end{equation}
and similarly
\begin{equation}
  \label{eq-Tk-empty2}
  \P\left( \mbox{$\bigcup$}_{k < K_b}\left\{ |T^b_k| > 0\right\} \right) < \frac18\,.
\end{equation}
Putting together \eqref{eq-Tk-large1},\eqref{eq-Tk-large2},\eqref{eq-Tk-empty1},\eqref{eq-Tk-empty2}, with probability at least $\frac12$ the following holds:
\begin{equation}
  \label{eq-control-Tk}
  |T_k^a| \leq \left\{\begin{array}{ll}
  \beta 2^k p_2 q & k \geq K_a \\
  0 & k < K_a\end{array}\right.\,,\quad
  |T_k^b| \leq \left\{\begin{array}{ll}
  \beta 2^k p_1 q & k \geq K_b \\
  0 & k < K_b\end{array}\right.
  \quad\mbox{ for all $k$.}
\end{equation}
When \eqref{eq-control-Tk} holds we can bound $R$ as follows:
\begin{align*}
R&=\sum_j \mu_a(j) \mu_b(j) A_j B_j E_j =\sum_k \sum_\ell \sum_{j\in T^a_k \cap T^b_\ell} \mu_a(j) \mu_b(j)
\le \sum_k \sum_\ell |T^a_k \cap T^b_\ell| 2^{-k+1} 2^{-\ell+1} \\
&\le \sum_k \sum_\ell \sqrt{|T^a_k|\, |T^b_\ell|} 2^{-k+1} 2^{-\ell+1}  = 4\bigg(\sum_k \sqrt{|T_k^a|}2^{-k}\bigg) \bigg(\sum_\ell \sqrt{|T_\ell^b|}2^{-\ell}\bigg) \\
&\leq 4 \beta (p_1 p_2)^{1/2} q \bigg(\sum_{k\ge K_a} 2^{-k/2}\bigg)  \bigg(\sum_{\ell\geq K_b} 2^{-\ell/2}\bigg) \,,
\end{align*}
where the second inequality used the fact that $|F_1 \cap F_2| \leq \min\{|F_1|,|F_2|\}\leq \sqrt{|F_1||F_2|}$ for any two finite sets $F_1,F_2$ and
the last inequality applied~\eqref{eq-control-Tk}. From here the proof is concluded by observing that
\begin{align*}
 R &\leq 16 (p_1 p_2)^{1/2} q 2^{-K_a/2} 2^{-K_b/2} = 128 \beta p_1 p_2 q^2\,.\qedhere
\end{align*}

\end{proof}

\begin{corollary}\label{cor-stat-lower-bound}
There exists some absolute $c>0$ such that for any pair of stationary strategies, the expected number of rounds required for a successful synchronization is at least $c/(p_1 p_2 q^2)$.
\end{corollary}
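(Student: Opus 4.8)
The plan is to deduce the corollary from Theorem~\ref{thm-R-lower-bound} by converting the per-round bound on the success probability $R$ into a lower bound on the first-passage time $X$. The first point to establish is that, even for a general stationary strategy with arbitrary correlations across time-slots, the conditional probability of syncing in any single fixed round is exactly $R$. Indeed, stationarity forces the marginal law of Alice's choice at every time-slot to be $\mu_a$ and that of Bob's to be $\mu_b$, while the two parties draw their selections from independent sources of randomness; hence for every $t$ one has $\P(X_t=1\mid A,B,E)=\sum_j \mu_a(j)\mu_b(j)A_j B_j E_j = R$, irrespective of the (unknown) offset between the two clocks and of any dependence between different rounds.

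Next I would control the first-passage time via a count-based union bound that is insensitive to temporal correlations. Writing $X=\min\{t:X_t=1\}$, linearity of expectation gives $\E\big[\sum_{t=1}^T X_t \mid A,B,E\big]=TR$, so by Markov's inequality (equivalently, a union bound over the rounds) $\P(X\le T\mid A,B,E)\le TR$. Consequently $\P(X>T\mid A,B,E)\ge \tfrac12$ whenever $T\le 1/(2R)$, and summing the tail through the identity $\E[X\mid A,B,E]=\sum_{T\ge 0}\P(X>T\mid A,B,E)$ yields $\E[X\mid A,B,E]\ge 1/(4R)$.

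Finally I would combine this with Theorem~\ref{thm-R-lower-bound}: on the environment event $\{R<Cp_1 p_2 q^2\}$, which has probability at least $\tfrac12$, we get $\E[X\mid A,B,E]>1/(4Cp_1 p_2 q^2)$, so taking expectations over $A,B,E$ gives $\E X\ge \tfrac12\cdot 1/(4Cp_1 p_2 q^2)=c/(p_1 p_2 q^2)$ with $c=1/(8C)$. The one delicate point — and the only place where the stationarity hypothesis is genuinely used — is the first step: a priori one might worry that correlations between rounds (as in a periodic strategy) could concentrate the sync opportunities and thereby shorten $X$, but the count-based union bound sidesteps this entirely, since it relies only on the time-invariant marginal identity $\P(X_t=1\mid A,B,E)=R$ and never on independence across rounds. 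Everything else is a routine tail summation.
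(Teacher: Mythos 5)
Your proposal is correct and follows essentially the same route as the paper: both arguments reduce to the observation that stationarity fixes the per-round marginal success probability at $R$, apply a first-moment/union bound over the first $1/(2R)$ rounds to get $\P(X\le 1/(2R))\le \tfrac12$, and combine with Theorem~\ref{thm-R-lower-bound} to conclude $\E X\ge 1/(8C p_1p_2q^2)$. Your explicit justification that correlations across rounds are irrelevant to the count-based bound is a slightly more careful rendering of the step the paper states in one line, but the substance is identical.
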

\begin{proof}
Recall that the success probability in any specific round, given the environments and strategies, is precisely $R$. Hence, conditioned on the value of $R$, the probability of synchronizing in one of the first $1/(2 R)$ rounds is at most $\frac12$. Theorem~\ref{thm-R-lower-bound} established that with probability at least $\frac12$ we have $R < C p_1 p_2 q^2$, therefore altogether with probability at least $\frac14$ there is no synchronizing before time $(2Cp_1 p_2 q^2)^{-1}$. We conclude that the statement of the corollary holds with $c = 1/(8C)$.
\end{proof}

\subsection{Proof of Theorem~\ref{thm-2}}

Partition the channels into infinitely many sets of infinitely many channels each. On the $i$-th round, Alice chooses the first channel in the $i$-th set which is open in her environment. Bob does likewise.

Consider the probability that both parties choose the same channel: Each channel has probability $1-(1-p_1)(1-p_2)=p_1+p_2-p_1 p_2$ of being open to Alice or Bob and probability $p_1 p_2$ of being open to both. Hence, the probability that the first channel open to either is open to both is
$$\frac{p_1 p_2}{p_1+p_2 - p_1 p_2} \ge \min\{p_1, p_2\}/2 \, .$$

If indeed both players chose the same channel at some round, it is necessarily open for both of them and with probability $q$ it is also open in the global environment. Hence, the probability of success at each round is at least $\min\{p_1,p_2\}q/2$. Different rounds use disjoint sets of channels, so the event of success at different rounds are independent and the number of round to success has a geometric distribution with expectation $\E X \leq 2/(\min\{p_1,p_2\}q)$.

For a lower bound of matching order observe the following: For any strategy Alice might employ and at any given round, the probability that the channel she chooses is open for both Bob and Eve is $p_2 q$, and this is an upper bound for the probability of success. Similar argument for Bob yields that the probability of success at any given round is at most $\min\{p_1,p_2\}q$.
A straightforward first moment argument (as in the proof of Corollary~\ref{cor-stat-lower-bound}) now implies that $\E X\ge 1/(4\min\{p_1,p_2\}q)$.

This completes the proof. \qed



\begin{bibdiv}
\begin{biblist}[\normalsize]

\bib{AS}{book}{
  author={Alon, Noga},
  author={Spencer, Joel H.},
  title={The probabilistic method},
  edition={3},
  publisher={John Wiley \& Sons Inc.},
  place={Hoboken, NJ},
  date={2008},
  pages={xviii+352},
}


\bib{ABE}{article}{
   author={Alpern, Steve},
   author={Baston, V. J.},
   author={Essegaier, Skander},
   title={Rendezvous search on a graph},
   journal={J. Appl. Probab.},
   volume={36},
   date={1999},
   number={1},
   pages={223--231},
}
		
\bib{AG}{book}{
   author={Alpern, Steve},
   author={Gal, Shmuel},
   title={The theory of search games and rendezvous},
   series={International Series in Operations Research \& Management
   Science, 55},
   publisher={Kluwer Academic Publishers},
   place={Boston, MA},
   date={2003},
   pages={xvi+319},
}

\bib{AW}{article}{
   author={Anderson, E. J.},
   author={Weber, R. R.},
   title={The rendezvous problem on discrete locations},
   journal={J. Appl. Probab.},
   volume={27},
   date={1990},
   number={4},
   pages={839--851},
}


\bib{BCMMW}{article}{
  title={White space networking with wi-fi like connectivity},
  author={Bahl, P.},
  author={Chandra, R.},
  author={ Moscibroda, T.},
  author={Murty, R.},
  author={Welsh, M.},
  journal={ACM SIGCOMM Computer Communication Review},
  volume={39},
  pages={27--38},
  date={2009},
}

		

\bib{JLR}{book}{
   author={Janson, Svante},
   author={{\L}uczak, Tomasz},
   author={Rucinski, Andrzej},
   title={Random graphs},
   series={Wiley-Interscience Series in Discrete Mathematics and
   Optimization},
   publisher={Wiley-Interscience, New York},
   date={2000},
   pages={xii+333},
}


\end{biblist}
\end{bibdiv}

\end{document}